%%%%%%%%% MASTER -- compiles the 4 sections

\documentclass[11pt,letterpaper]{article}

\usepackage{amssymb}%,bbm}
\usepackage{amsmath}
\usepackage{subfigure}
\usepackage{cite}
\usepackage{graphicx}
\usepackage{graphics}
\usepackage{color}
\usepackage{xspace}
\usepackage{psfrag}
\usepackage{algorithmicx}
\usepackage{algorithm}
\usepackage{algpseudocode}
\usepackage{multirow}
\usepackage{array}
\usepackage{url}

%%%%%%%%%%%%%%%%%%%%%%%%%%%%%%%%%%%%%%%%%%%%%%%%%%%%%%%%%%%%%%%%%%%%%%%%%
\pagestyle{plain}                                                      %%
%%%%%%%%%% EXACT 1in MARGINS %%%%%%%                                   %%
%\setlength{\textwidth}{6.5in}     %%                                   %%
%\setlength{\oddsidemargin}{0in}   %% (It is recommended that you       %%
%\setlength{\evensidemargin}{0in}  %%  not change these parameters,     %%
%\setlength{\textheight}{8.5in}    %%  at the risk of having your       %%
%\setlength{\topmargin}{0in}       %%  proposal dismissed on the basis  %%
%\setlength{\headheight}{0in}      %%  of incorrect formatting!!!)      %%
%\setlength{\headsep}{0in}         %%                                   %%
%\setlength{\footskip}{.5in}       %%
\setlength{\topmargin}{-.5in}

\setlength{\textheight}{9in}
\setlength{\oddsidemargin}{-.125in}
\setlength{\textwidth}{6.75in}                                         %%
%%%%%%%%%%%%%%%%%%%%%%%%%%%%%%%%%%%%                                   %%
                    %%
                   %%
%\bibliographystyle{plain}                                              %%
%%%%%%%%%%%%%%%%%%%%%%%%%%%%%%%%%%%%%%%%%%%%%%%%%%%%%%%%%%%%%%%%%%%%%%%%%

%PUT YOUR MACROS HERE

\usepackage{amsmath,amsthm}
\usepackage{amssymb}
\usepackage{floatflt,setspace}
%\doublespacing
\newtheorem{theorem}{Theorem}
\newtheorem{definition}{Definition}

\newtheorem{lemma}{Lemma}

%\pagestyle{empty}
%\includeonly{NSFsumm}
\title{Information-theoretic Bounds on Matrix Completion under Union of Subspaces Model}
\author{Vaneet Aggarwal and Shuchin Aeron\thanks{V. Aggarwal is with Purdue University, W. Lafayette, IN 47907, email: vaneet@purdue.edu. This work of V. Aggarwal was supported in part by Air Force Research Lab Visiting Faculty Research Program Award.
 S. Aeron is with Tufts University, Medford, MA 02155, email: shuchin@ece.tufts.edu. S. Aeron is supported in part by NSF grant 1319653.} }
\date{\vspace{-5ex}}

\begin{document}
\maketitle
\begin{abstract}
In this short note we extend some of the recent results on matrix completion under the assumption that the columns of the matrix can be grouped (clustered) into subspaces (not necessarily disjoint or independent). This model deviates from the typical assumption prevalent in the literature dealing with compression and recovery for big-data applications. The results have a direct bearing on the problem of subspace clustering under missing or incomplete information.
%Subspace clustering  is a useful tool for analyzing large complex data, but in many relevant applications, only a limited measurements of data can be observed. This paper considers the information-theoretic limits of  subspace clustering, identifying the number of linear measurements that are sufficient to complete the data.  A number of measurements which is of the order of product of dimension of data, number of subspaces, and the rank of subspace are saved in union of subspace as compared to completion assuming one large subspace.

\end{abstract}
\section{Introduction}
Matrix  completion  refers  to  the  recovery  of  a  low-rank matrix   from  a   (small)  subset   of   its   entries  or   a   (small) number  of  linear  combinations  of  its  entries \cite{candes,gross,recht,Cai}. In essence, the methods are aimed at recovering the column/row subspaces from limited measurements. Even the sketching methods \cite{Drineas} aim to find the best column (or row) subspace of a matrix.

However, in many practical applications, the columns of the data matrix can belong to different low rank subspaces (or affine subspaces) \cite{Kanade,Kana,now1,vidal}. Motivated by this observation, in this paper we assume that the different columns in the data matrix of size $m\times n$ lie in one of the $K$ subspaces, where the dimension of these subspaces are $(r_1, \cdots, r_K)$. Now, suppose we have $k$ linear measurements of this matrix. The general question is how many linear measurements, satisfying certain properties are sufficient such that the data can be recovered from these linear measurements.

%Subspace clustering aims to infer these underlying subspaces \cite{vidal,sujay1,bol2,wang,aarti}.

This problem has direct bearing on the problem of subspace clustering \cite{vidal,sujay1,bol2,wang,aarti} under missing or incomplete data. Subspace clustering with missing data has been studied in \cite{bal1,bal2}. Recently, the authors of \cite{bal3} considered the number of samples needed for reconstruction of data, where the number of partially observed data vectors per subspace is a rank-degree power of the dimension. In contrast to these results, in this paper we show information-theoretically, number of linear measurements greater than $Kr(m+n/K-r)$ suffice for reconstruction of data when the columns are assumed to come from a union of $K$ subspaces each of dimension $r$. Further, we note that rank-1 measurement matrices are enough over the whole data matrix.

%Efficient subspace clustering approaches have been extensively studied \cite{vidal}. Recently, Sparse Subspace Clustering algorithm has been shown to have exact clustering property \cite{aarti} in the presence of complete data. In this context, it becomes important as to how much of the limited measurements are sufficient for subspace clustering, and completion.

Our main tool to obtain the sufficiency result relies on recent information-theoretic results on matrix completion in \cite{bol}, inspired by fundamental limits on analog source compression in \cite{yih}. In this paper, we specialize these results to the union of subspaces model.

The rest of the paper is organized as follows. Section II gives the system model, and Section III gives the sufficient number of linear measurements. Section IV concludes this paper.

\section{Model and Preliminaries}
For notations, let Roman  letters $A,B,\cdots $ designate  deterministic matrices  and $a,b,\cdots$ stands  for  deterministic  vectors.  Bold-face  letters ${\bf A}, {\bf B},\cdots$ and ${\bf a}, {\bf b}, \cdots$ denote random matrices  and vectors, respectively. Let ${\cal M}^{m\times n}_r$ and ${\cal N}^{m\times n}_r$ denote  the  set  of  matrices $A\in {\mathbb R}^{m\times n}$ with $\text{rank}(A)\le r$ and $\text{rank}(A) =r$, respectively. For  a  random  matrix ${\bf X}\in R^{m\times n}$ of  arbitrary distribution $\mu_{\bf X}$, an $(m\times n, k)$ code consists of  linear  measurements ($<A_1,\cdot>, \cdots, <A_k,\cdot>$, where $<A,B>$ is the trace inner product between $A$ and $B$) $T:R^{m\times n} \to R^k$, and  a measurable decoder $g:R^k\to R^{m\times n}$. For given measurement matrices $A_i$, we say that a decoder $g$ achieves error probability $\epsilon$ if $\Pr (g((<A_1,{\bf X}>, \cdots, <A_k,{\bf X}>)^T)\ne {\bf X})\le \epsilon$. For $\epsilon>0$,  we  call  a  nonempty  bounded  set ${\cal S} \in {\mathbb R}^{m\times n}$ an $\epsilon$-support set of the random matrix ${\bf X}\in R^{m\times n}$ if $\Pr[{\bf X} \in  {\cal S}]\ge 1 - \epsilon$.

We next define Minkowski dimension.
\begin{definition}[\cite{bol}]
Let ${\cal S}$ be a nonempty bounded set in ${\mathbb R}^{m\times n}$. The lower Minkowski dimension of ${\cal S}$ is defined as
\begin{equation}\label{dim1}
  \underline{\text{dim}}({\cal S}) = \liminf_{\rho \to 0} \frac{\log N_{\cal S}(\rho)}{-\log \rho},
\end{equation}
and the upper Minkowski dimension of ${\cal S}$ is defined as
\begin{equation}\label{dim2}
  \overline{\text{dim}}({\cal S}) = \limsup_{\rho \to 0} \frac{\log N_{\cal S}(\rho)}{-\log \rho},
\end{equation}
where $N_{\cal S}$ denotes the covering number of ${\cal S}$ given by
\begin{equation}\label{cn}
  N_{\cal S}(\rho) = \min\{k \in {\mathbb N}: {\cal S}\subseteq\cup_{i\in \{1, \cdots, k\} } {\cal B}_{m\times n} (M_i,\rho), M_i \in {\mathbb R}^{m\times n} \},
\end{equation}
and  ${\cal B}_{k}(\mu,s)$ denotes  the  open  ball  of  radius $s$ centered  at $\mu \in {\mathbb R}^k$.
\end{definition}

We next give a bound of number of measurements needed to decode matrix from limited measurements.
\begin{lemma}[\cite{bol}]\label{dim_needed}
Let ${\cal S}  \subseteq {\mathbb R}^{m\times n}$ be  an $\epsilon$-support  set  of ${\bf X}\in {\mathbb R}^{m\times n}$. Then, for Lebesgue a.a. measurement matrices $A_i, i=1\cdots, k$,  there  exists  a  decoder  achieving  error  probability $\epsilon$,provided that $k >\underline{\text{dim}}({\cal S})$
\end{lemma}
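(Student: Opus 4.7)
My plan is to build a unique-preimage decoder, reduce the error analysis to a pointwise genericity claim via Fubini, and carry out a covering estimate that exploits the translation invariance of the Minkowski dimension.

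First, define $g:\mathbb{R}^k\to\mathbb{R}^{m\times n}$ by $g(y)=X$ whenever $\{X\in\mathcal{S}:T(X)=y\}=\{X\}$, and $g(y)=0$ otherwise. The error event $\{g(T(\mathbf{X}))\ne\mathbf{X}\}$ is contained in
\[
 \{\mathbf{X}\notin\mathcal{S}\}\ \cup\ \bigl\{\mathbf{X}\in\mathcal{S}\text{ and }\exists\,X'\ne\mathbf{X}\text{ in }\mathcal{S}\text{ with }T(X')=T(\mathbf{X})\bigr\}.
\]
The first event contributes at most $\epsilon$ by the $\epsilon$-support hypothesis, so it suffices to show that for Lebesgue-a.e.\ $(A_1,\dots,A_k)$ the second event has probability zero under the law of $\mathbf{X}$.

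Writing $N_A:=\{x\in\mathcal{S}:\exists\,x'\ne x\text{ in }\mathcal{S},\,T_A(x')=T_A(x)\}$ and $\mathcal{D}_x:=(\mathcal{S}-x)\setminus\{0\}$, the second event is $\{\mathbf{X}\in N_A\}$. I would endow the $A_i$ with i.i.d.\ standard Gaussian entries (whose density is everywhere positive, so Lebesgue-null and Gaussian-null sets of $\mathbb{R}^{kmn}$ coincide) and apply Fubini to reduce the problem to the pointwise claim $\Pr_A[0\in T_A(\mathcal{D}_x)]=0$ for every $x\in\mathcal{S}$. The key observation is that translation by $-x$ is an isometry of $\mathbb{R}^{m\times n}$, so $\underline{\text{dim}}(\mathcal{D}_x)=\underline{\text{dim}}(\mathcal{S})<k$; anchoring at $x$ before unioning over $x'\in\mathcal{S}$ avoids having to work with the symmetric difference $\mathcal{S}-\mathcal{S}$, whose Minkowski dimension could be as large as $2\underline{\text{dim}}(\mathcal{S})$.

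The pointwise claim is then a covering estimate. Fix thresholds $\eta\in(0,1)$ and $M>0$, splitting off $\Pr[\|A\|_{\mathrm{op}}>M]$ (Gaussian-exponentially small in $M$) and restricting attention to $\mathcal{D}_x\cap\{\|z\|\ge\eta\}$. Choose $\alpha$ with $\underline{\text{dim}}(\mathcal{S})<\alpha<k$; by the definition of lower Minkowski dimension there are arbitrarily small $\rho>0$ admitting a $\rho$-net of $\mathcal{D}_x$ of size at most $\rho^{-\alpha}$. For any net point $z_i$ with $\|z_i\|\ge\eta$, the Gaussian small-ball inequality $\Pr[\|T_A(z_i)\|\le\delta]\le C_k(\delta/\|z_i\|)^k$ gives $\Pr[\exists z\in B(z_i,\rho)\cap\mathcal{D}_x,\,T_A(z)=0\mid\|A\|_{\mathrm{op}}\le M]\le C_k(M\rho/\eta)^k$; a union bound over the net then yields the overall estimate $C_k(M/\eta)^k\rho^{k-\alpha}$, which tends to $0$ as $\rho\downarrow 0$ since $k>\alpha$. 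Sending $M\to\infty$ and then $\eta\downarrow 0$, with monotone continuity of probability, completes the argument.

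The main obstacle, in my view, is the anchoring step: recognizing that the correct set to cover is $\mathcal{S}-x$ rather than the symmetric difference $\mathcal{S}-\mathcal{S}$ is not immediate, and is exactly what allows the hypothesis $k>\underline{\text{dim}}(\mathcal{S})$ rather than the strictly weaker $k>2\underline{\text{dim}}(\mathcal{S})$ that a naive pairwise-difference cover would force. Once this reduction is in place, the Gaussian small-ball union-bound calculation is essentially routine; the remaining technicalities are the operator-norm cutoff $M$ and the shell truncation $\eta$, both handled by straightforward limiting arguments, together with the standard measurability bookkeeping needed to apply Fubini to the analytic set $\{(x,A):x\in N_A\}$.
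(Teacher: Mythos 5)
The paper does not prove this lemma; it is quoted from \cite{bol}, and your argument reproduces essentially the proof given there (which in turn follows Wu--Verd\'u): the unique-preimage-in-$\mathcal{S}$ decoder, the Fubini/Gaussian reduction to the pointwise claim $\Pr_A[0\in T_A((\mathcal{S}-x)\setminus\{0\})]=0$ with the crucial anchoring at $x$ so that only $\underline{\text{dim}}(\mathcal{S})<k$ is needed, and the covering plus small-ball union bound with the shell and operator-norm truncations. Your proposal is correct; the only points you gesture at rather than carry out --- measurability of the decoder and of the collision set $\{(x,A):x\in N_A\}$ needed for Fubini --- are exactly the technicalities the reference disposes of via analyticity/universal measurability, as you indicate.
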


We will now describe the union of subspace model that is considered in this paper.
\begin{definition}
The union of subspace set ${\cal US}^{m\times n}_{K, (r_1, \cdots, r_K)}$ is the of matrices $X$ for which its columns can be divided among $K$ groups to get $X_1, \cdots X_K$, where each column of $X$ is in exactly one $X_i$, and $X_i \in {\cal M}^{m\times n_i}_{r_i}$, where $n_i\ge 0, \sum_{i=1}^K n_i = n$.
\end{definition}

%\textcolor{red}{Should we consider this alternative definition by \emph{fixing} $r_i, i = 1,2,...,K$ - as typically these can be assumed to be fixed - perhaps we can make this clear later on. What do you think?
%\begin{definition}
%The union of subspace set ${\cal US}^{m\times n}_{K, (r_1, \cdots, r_K)}$ is the of matrices $X$ for which its columns can be divided among $K$ groups to get $X_1, \cdots X_K$, where each column of $X$ is in exactly one $X_i$, and $X_i \in {\cal N}^{m\times n_i}_{r_i}$, where $n_i\ge 0, \sum_{i=1}^K n_i = n$.
%\end{definition}}

\section{Main Results}
\begin{theorem}\label{thm:main}
  Let ${\cal S}\subseteq {\cal US}^{m\times n}_{K, (r_1, \cdots, r_K)}$ be a non-empty bounded set. Then,
  \begin{equation}\label{main}
    \overline{\text{dim}}({\cal S}) \le m \sum_i r_i + n \max_i r_i - \sum_i r_i^2
  \end{equation}
\end{theorem}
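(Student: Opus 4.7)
The plan is to reduce the theorem to combining three standard facts about upper Minkowski dimension: (i) a bounded subset of the rank-$r'$ matrices $\mathcal{M}^{m\times n'}_{r'}$ has upper Minkowski dimension at most $r'(m+n'-r')$; (ii) the upper Minkowski dimension of a Cartesian product is bounded by the sum of the dimensions of the factors; and (iii) the upper Minkowski dimension of a finite union equals the maximum of the dimensions of the pieces.

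First I would stratify $\mathcal{S}$ by column-to-subspace assignment. For each map $\sigma:\{1,\ldots,n\}\to\{1,\ldots,K\}$, set $n_i(\sigma)=|\sigma^{-1}(i)|$, so that $\sum_i n_i(\sigma)=n$, and let $\mathcal{S}_\sigma\subseteq\mathcal{S}$ consist of those matrices whose columns indexed by $\sigma^{-1}(i)$ together span a space of dimension at most $r_i$ for every $i$. There are only $K^n$ such $\sigma$, and by the definition of $\mathcal{US}^{m\times n}_{K,(r_1,\ldots,r_K)}$ we have $\mathcal{S}=\bigcup_\sigma \mathcal{S}_\sigma$. After a fixed column permutation, $\mathcal{S}_\sigma$ embeds isometrically into a bounded subset of $\mathcal{M}^{m\times n_1(\sigma)}_{r_1}\times\cdots\times\mathcal{M}^{m\times n_K(\sigma)}_{r_K}$.

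Combining (i) and (ii) on each stratum yields
\begin{equation*}
\overline{\text{dim}}(\mathcal{S}_\sigma)\;\le\;\sum_{i=1}^K r_i\bigl(m+n_i(\sigma)-r_i\bigr)\;=\;m\sum_i r_i\;-\;\sum_i r_i^2\;+\;\sum_i r_i\, n_i(\sigma),
\end{equation*}
where in the degenerate case $n_i(\sigma)<r_i$ the ambient-space bound $m\, n_i(\sigma)$ still does not exceed $r_i(m+n_i(\sigma)-r_i)$ whenever $m\ge r_i$, which we may assume (otherwise the subspace of dimension $r_i$ contributes no constraint and the claim is vacuous). Since the $n_i(\sigma)$ are nonnegative and sum to $n$, the linear functional $\sum_i r_i n_i(\sigma)$ is maximized by concentrating all columns in the group with largest $r_i$, giving $\sum_i r_i n_i(\sigma)\le n\max_i r_i$. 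Applying (iii) to take the maximum across the $K^n$ strata then produces the desired bound $m\sum_i r_i+n\max_i r_i-\sum_i r_i^2$.

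The main obstacle is fact (i), the covering-number estimate for bounded subsets of $\mathcal{M}^{m\times n'}_{r'}$. If this is not readily imported from \cite{bol} (where it underlies the vanilla low-rank matrix-completion dimension count $r(m+n-r)$), I would establish it by a Lipschitz SVD-type parameterization: the map $(U,\Sigma,V)\mapsto U\Sigma V^\top$ from a compact region of the product of two Stiefel manifolds with a bounded box in $\mathbb{R}^{r'}$ is Lipschitz, its domain has real dimension $r'(m+n'-r')$, and covering numbers transfer along Lipschitz maps up to a constant factor that is harmless in the $\liminf/\limsup$ defining $\overline{\text{dim}}$. The stratification, product inequality, union equality, and simplex optimization in the rest of the argument are routine.
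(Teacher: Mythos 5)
Your proposal is correct and follows essentially the same route as the paper: decompose $\mathcal{S}$ as a finite union over column-to-subspace assignments, bound each stratum as a Cartesian product of bounded low-rank matrix sets using the $r_i(m+n_i-r_i)$ estimate from \cite{bol} together with subadditivity of upper Minkowski dimension under products and its max-stability under finite unions, and then maximize the linear functional $\sum_i r_i n_i$ over the simplex to get $n\max_i r_i$. Your additional remarks on the degenerate case $n_i<r_i$ and the Lipschitz SVD parameterization are sensible safeguards but not needed beyond what the cited results already provide.
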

\begin{proof}
  We can represent $X \in {\cal US}^{m\times n}_{K, (r_1, \cdots, r_K)}$ with a set of columns $C_i, i = 1, \cdots, K$ for the $K$ subspaces with $|C_i|=n_i$, and $X({C_i})\in {\cal M}^{m\times n_i}_{r_i}$ represents the $X$ in those columns. This, ${\cal US}^{m\times n}_{K, (r_1, \cdots, r_K)}$ is equivalent to
  \[\cup_{C_1, \cdots, C_K} \times_{i=1}^K {\cal M}^{m\times |C_i|}_{r_i}\]
  where $\times$ refers to the Cartesian product. Therefore the manifold of union of subspaces is a \emph{product manifold}. Since upper Minkowski dimension for ${\cal M}^{m\times |C_i|}_{r_i}$  is at most $r_i(m+|C_i|-r_i)$ \cite{bol}, the upper Minkowski dimension for $\times_{i=1}^K {\cal M}^{m\times |C_i|}_{r_i}$ is at most $\sum_i(r_i(m+|C_i|-r_i))$.

Further, since upper Minkowski dimension of union is the max of the Minkowski dimension [Section 3.2, \cite{fra}], we have

\begin{eqnarray}
    \overline{\text{dim}}({\cal S}) &\le& \max_{C_1, \cdots, C_K}\sum_i(r_i(m+|C_i|-r_i))\\
    &=& \sum_ir_im   +  \max_{C_1, \cdots, C_K}\sum_i r_i |C_i|-\sum_i r_i^2\\
    &\le& m \sum_i r_i + \max_{C_1, \cdots, C_K}\sum_i (\max_i r_i)|C_i| - \sum_i r_i^2\\
    &=& m \sum_i r_i +  n(\max_i r_i) - \sum_i r_i^2
  \end{eqnarray}
\end{proof}

We note the following points.

\begin{enumerate}
\item  Since $\underline{dim}\le \overline{dim}$, from Theorem \ref{thm:main} and Lemma \ref{dim_needed}, we see that there  exists  a  decoder  that  achieves  error  probability $\epsilon$ for  $k$ Lebesgue  a.a.  measurement  matrices, for $k>m \sum_i r_i + n \max_i r_i - \sum_i r_i^2$.

\item {In the special case when the subspaces are independent and when $r_i = r$ for all $i$, we have the sufficient number of linear measurements for the UOS model as $Kr(m-r)+ n r+1$. Under a single subspace model, the number of measurements for a matrix with rank $Kr$ (which will be total dimension of the space spanned by the columns under the independence assumption and assuming $n \geq Kr$) is $ (m + n - Kr)Kr+1$. The difference in the number of linear measurements is $(n-Kr)(K-1)r$. When $n = Kr$ this tells us that there is no advantage in using the UOS model as compared to a single subspace model.}

 %\textcolor{red}{Some results in subspace clustering \cite{Park,aarti} try to find out \emph{how many points per subspace are needed so that subspaces can be uniquely identified}. Are we saying something like that here? - I guess not, all we are saying is that one can compress data from a union of subspaces, recover it and apply subspace clustering.} -
 \item Note that there is no additional overhead in number of measurements for the knowledge of $K$ sets of columns that make each subspace since the number of measurements are equivalent to measuring each of $K$ subspaces knowing which columns make each subspace. We also note that with exact completion with these measurements, subspace clustering can be performed \cite{aarti} to also get different subspace clusters.

\item We further note from Theorem 2 of \cite{bol} that rank one measurement matrices are sufficient, rather than general linear measurement matrices. Rank one measurement matrices are attractive as they require less storage space than general measurement matrices and   can   also   be   applied  faster.

\item We note that rank one measurements are used over the whole data rather than performing subspace clustering with limited measurements, followed by performing measurements in each subspace. %\textcolor{red}{Is this result saying that compression followed by clustering is worse (in terms of storage) than clustering followed by compression?}
\end{enumerate}

%\textcolor{red}{If we look at the result again, (item 2 above), it seems that one can say that per data point, $\frac{Kr(m-r)}{n}+ r$ measurements are sufficient and per subspace $r(m-r)+ \frac{n r}{K}$ are sufficient??}.

\section{Conclusion}
This paper finds the number of linear measurements that are sufficient to estimate a matrix that is formed by a union of subspaces. The savings of measurements with the additional structure of union of subspace model depend on the product of dimension, number of subspaces, and the rank of subspace.

\bibliographystyle{IEEETran}
\small

\end{document}